\definecolor{refcolor}{RGB}{0,0,190}
\theoremstyle{definition}
\newtheorem{theorem}{Theorem}
\newtheorem{remark}{Remark}
\newtheorem{example}{Example}
\newtheorem{hypothesis}{Hypothesis}
\newtheorem{property}{Property}
\def\({\left(}
\def\){\right)}
\newcommand{\hilbert}{\mathscr{H}}
\newcommand{\mc}[1]{\mathcal{#1}}
\newcommand{\qmU}{$\mathscr{U}$}
\newcommand{\qmR}{$\mathscr{R}$}
\newcommand{\C}{\mathbb{C}}
\newcommand{\tr}{\textnormal{tr}}
\newcommand{\tn}{\textnormal}
\newcommand{\dsfrac}[2]{\displaystyle{\frac{#1}{#2}}}
\newcommand{\ie}{\textit{i.e.} }
\newcommand{\eg}{\textit{e.g.} }
\newcommand{\schrod}{Schr\"odinger}
\newcommand{\bra}[1]{\langle#1|}
\newcommand{\ket}[1]{|#1\rangle}
\newcommand{\braket}[2]{\langle#1|#2\rangle}
\def\sref #1{\S\ref{#1}}
\newcommand{\image}[3]{
\begin{figure}[!ht]
\includegraphics[width=#2\textwidth]{#1}
\caption{\small{\label{#1}#3}}
\end{figure}
}
\title{Quantum Measurement and Initial Conditions}
\author{Ovidiu Cristinel Stoica*}
\thanks{*Department of Theoretical Physics, National Institute of Physics and Nuclear Engineering -- Horia Hulubei, Bucharest, Romania. Email: \href{mailto:cristi.stoica@theory.nipne.ro}{cristi.stoica@theory.nipne.ro},  \href{mailto:holotronix@gmail.com}{holotronix@gmail.com}}
\date{\today}
\begin{document}

\begin{abstract}
Quantum measurement finds the observed system in a collapsed state, rather than in the state predicted by the Schr\"odinger equation.
Yet there is a relatively spread opinion that the wavefunction collapse can be explained by unitary evolution (for instance in the decoherence approach, if we take into account the environment).

In this article it is proven a mathematical result which severely restricts the initial conditions for which measurements have definite outcomes, if pure unitary evolution is assumed. This no-go theorem remains true even if we take the environment into account.
The result does not forbid a unitary description of the measurement process, it only shows that such a description is possible only for very restricted initial conditions.

The existence of such restrictions of the initial conditions can be understood in the four-dimensional block universe perspective, as a requirement of global self-consistency of the solutions of the Schr\"odinger equation.
\end{abstract}

\maketitle

\tableofcontents


\section{Introduction}

\subsection{Initial conditions in quantum mechanics}

Quantum mechanics is usually presented as consisting of two processes (von Neumann, \cite{vonNeumann1955foundations}). The first one is the \textit{unitary evolution}, or \textit{the {\qmU} process}
\begin{equation}
\label{eq_unitary}
\ket{\psi(t)}=U(t,t_0)\ket{\psi_0},
\end{equation}
obtained by solving the {\schrod} equation with initial condition $\ket{\psi(t_0)}=\ket{\psi_0}\in\hilbert$, where $\hilbert$ is a \textit{Hilbert space} (fig. \ref{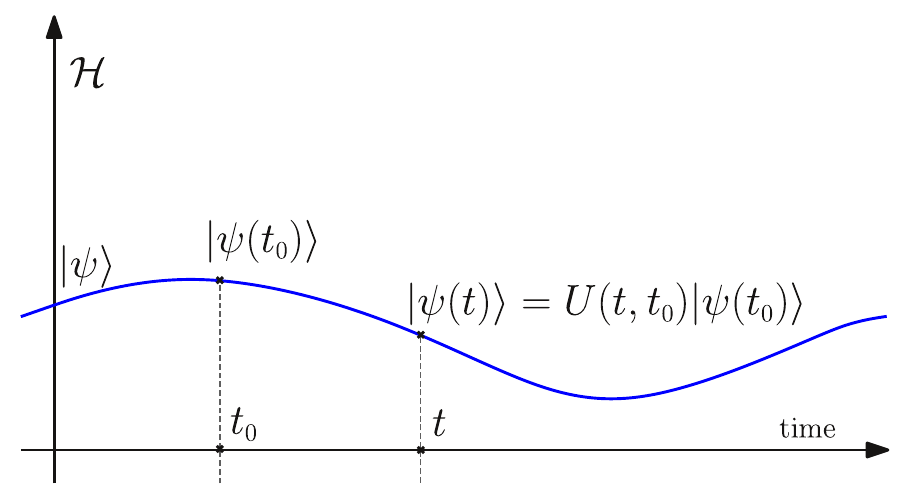}).

\image{qm_u_evolution.pdf}{0.6}{The unitary evolution, or the {\qmU} process.}

A classical system is determined by a set of partial differential equations, and initial conditions. Initial conditions are determined by an experiment performed at the time $t_0$ (within an error inherent to measurements).
In classical mechanics, the observation process can find the system in any allowed state.

By \emph{measurement} we will understand a quantum measurement, performed with a \emph{measurement device} or \emph{apparatus}, as described by von Neumann \cite{vonNeumann1955foundations}. Accordingly, what we measure are \emph{observables}, which are Hermitian operators defined on the Hilbert space of the observed system. The outcome of the measurement is an \emph{eigenvalue} of the observable, and the observed system is found to be in an \emph{eigenstate} of the observable.

The measurement of the quantum state of the system is considered to trigger the second process, the \textit{wavefunction collapse}, or the \textit{state vector reduction} process {\qmR} (fig. \ref{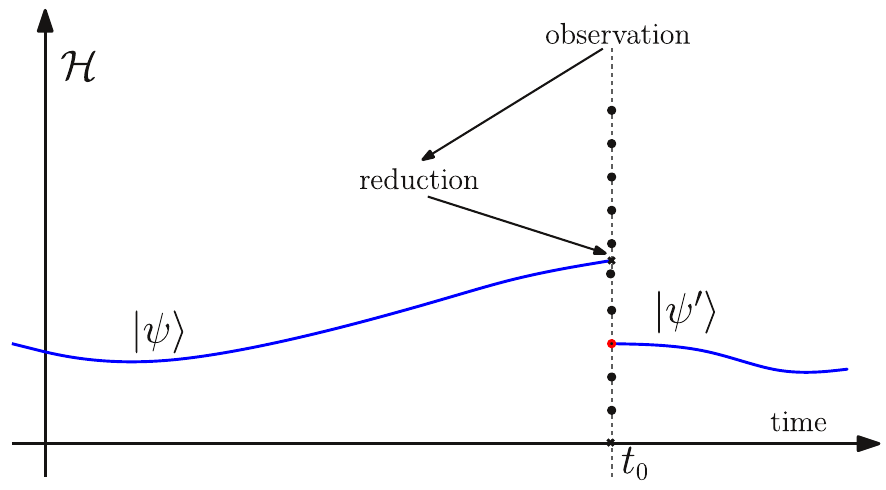}).
This consists of projecting the state of the quantum system on an eigenstate of the measured observable, resetting by this its initial conditions.

\image{qm_reduction_0.pdf}{0.6}{The wavefunction collapse, or the {\qmR} process.}

\subsection{The density matrix formalism}

More generally, we can consider instead of the state vector $\ket{\psi}\in\hilbert$, a \textit{density operator} (or matrix) $\rho$, which is Hermitian on $\hilbert$. In at least one orthonormal basis $\(\ket{\psi_i}\)_i$, the density matrix $\rho$ has the diagonal form
\begin{equation}
\label{eq_density_matrix}
\rho=\sum_ip_i\ket{\psi_i}\bra{\psi_i},
\end{equation}
where $p_i\geq 0$, $\sum_ip_i=1$. The density matrix can be interpreted as a \textit{statistical ensemble} (``improper mixture'', by the terminology of d'Espagnat \cite{d'E76}), where $p_i$ is the probability that the system is in the state $\ket{\psi_i}$. It can also be understood as a \textit{reduced density matrix} of a pure state from a higher dimensional Hilbert state (``proper mixture''), representing all the information contained in a system which is entangled with another system which is ignored. 

The {\qmU} process for a density matrix $\rho$ is described with the help of the time evolution operator, by
\begin{equation}
\label{eq_density_matrix_unitary}
\rho(t)=U(t,t_0)\rho(t_0)U(t,t_0)^{-1}.
\end{equation}
In the decoherence interpretation \cite{joos-zeh1985emergence,Zeh96,Zur98}, to apply the {\qmR} process, the density matrix should be \textit{decohered}, that is, it should be diagonal in an eigenbasis of the observable. Then, it is interpreted as a statistical ensemble, and the probability to find the system in the state $\ket{\psi_i}$ is given by $p_i$.

\subsection{Is unitary evolution violated during measurement?}

Unitary evolution seems to be ubiquitous. The exception, and the reason for the introduction of the {\qmR} process, is that when the system is in a certain state, a subsequent measurement may project it to a different state. But one does not exclude the possibility that, when we consider in addition to the observed system, the environment (including the measurement apparatus and the observer) and the interactions between these systems, the evolution turns out to be unitary.

The viewpoint that unitary evolution is enough, and can account for the {\qmR} process too, gained more and more supporters lately, due to the development of the \emph{many worlds interpretation} (MWI) \cite{Eve57,Eve73,dW71,dWEG73}, the \textit{consistent histories} interpretation \cite{Gri84,Omn88,GH90a}, and especially of the \textit{decoherence program} \cite{joos-zeh1985emergence,Zeh96,Zur98}.

A pole took place at a conference on quantum computation, at the Isaac Newton Institute in Cambridge, in July 1999. The question \textit{``Do you believe that all isolated systems obey the {\schrod} equation (evolve unitarily)?''} received 59 answers of ``yes'', 6 of ``no'', and the remaining 31 physicists were undecided.
Tegmark and Wheeler commented about this \cite{tegmark100quantum}
\begin{quote}
although these [quantum textbooks] infallibly list explicit non-unitary collapse as a fundamental postulate in one of the early chapters, the poll indicates that many physicists -- at least in the burgeoning field of quantum computation -- no longer take this seriously.
\end{quote}

The decoherence approach is based on the idea that the interactions with the environment cause the density matrix of the observed system to become diagonal, in a preferred basis. After diagonalization, which is viewed as a \textit{pre-measurement}, the density matrix can be interpreted as a statistical ensemble. Presumably, once two branches decohered, they no longer interact with one another, this leading to an ``effective collapse'', which would replace the {\qmR} process.

Other proposal that for quantum mechanics the {\qmU} process is enough, even for the apparent {\qmR} process, was made by the author in \cite{Sto08b,Sto12QMa}, and another one by 't Hooft in \cite{hooft2011wave,tHooft2014CellularAutomatonInterpretationQM}.

Regarding the universality of the {\qmU} process, the opinions are divided. More details, and deep analyses of these opinions, are listed and discussed in \cite{adler2003decoherence,schlosshauer2005decoherence,sch07}. Therefore, it is justified to consider the hypothesis:
\begin{hypothesis}
\label{hypo:unitary}
The {\qmR} process is reducible to the {\qmU} process.
\end{hypothesis}

We will show that from this hypothesis follows that only a small part of the possible initial states of the observed system can evolve into eigenstates of the observable.

\subsection{Standard arguments in favor of a discontinuous collapse}

The introduction by von Neumann \cite{vonNeumann1955foundations} of the state vector reduction process {\qmR} is justified by the argument that the unitary evolution process {\qmU} would transform the system $\ket{\eta}\ket{\psi}$ made of the observed system $\ket{\psi}$ and the apparatus $\ket{\eta}$ into a superposition of the form $\sum_{i\in\sigma(\mc O)}\ket{\eta}_i\ket{\psi}_i$, where $\sigma(\mc O)$ is the spectrum of the observable $\mc O$, and $\ket{\psi}_i$ are eigenstates, while in reality we never get such a superposition, but only one term from the sum, corresponding to only one eigenvalue, according to the Born rule.

The standard answer to this argument given by the supporters of Hypothesis \ref{hypo:unitary} is based on decoherence: if we introduce enough additional extra variables, in the form of the environment, the density matrix of the observed system becomes diagonal in the eigenbasis of the observable. Then, since the density matrix is diagonal, we can just interpret it as a statistical ensemble, and treat the probability as being classical, resulting from our lack of knowledge of the initial conditions. A weakness of this kind of argument is that if we choose a different observable instead of $\mc O$, one which does not commute with $\mc O$, we obtain a different decomposition of the density matrix as a mixture, so the initial conditions admitted by the new observable will be different. This already shows that the initial conditions of the observed system have to depend on those of the apparatus, if we want to assume unitary evolution. In this article, we will give a mathematical proof that this happens in general, no matter how we invoke the environment.
Theorem \ref{thm_unitary_ic} will show that if we assume unitary evolution during the measurement process, the condition that the initial conditions are very special is unavoidable.

\section{The property of restricted initial conditions}

Let $\mc S$ be the set of all mathematically possible states of the observed system. For example, the states can be the rays or density matrices in a Hilbert space $\hilbert$, but they can be any kinds of quantities which are supposed to reproduce the predictions of quantum mechanics.
Such a system is said to have {\em the property of restricted initial conditions} if it satisfies the following:

\begin{property}[of restricted initial conditions]
\label{property_ic}
Not all mathematically possible initial states of a system lead to physically acceptable states.
\end{property}

In this article we are interested in a specific type of ``physically acceptable states'', which are those representing definite results of quantum measurements. We will see that, in order to have definite outcomes of measurements by unitary evolution, the initial conditions have to be restricted.

\begin{example}
\label{example_stationary}
For example, consider two quantum systems represented by the state vectors $\ket{\mu}\in\hilbert_{\mu}$ and $\ket{\psi}\in\hilbert_{\psi}$.
Suppose they satisfy the following conditions:
\begin{enumerate}
	\item 
$\ket{\psi}$ represents the observed system,
	\item 
$\ket{\psi}$ is stationary, {\ie} does not change in time (except by a phase factor),
	\item 
$\ket{\mu}$ represents an apparatus measuring an observable $\mc O$ (a Hermitian operator on the Hilbert space of the observed system),
	\item 
the measurement does not disturb the observed system.
\end{enumerate}
If these conditions are satisfied, then the only possible initial conditions of $\ket{\psi}$ are those which are eigenstates of the observable $\mc O$.
\end{example}

\begin{example}
\label{example_undisturbed}
Let us drop the condition 2 from the previous example, and assume that the observed system evolves unitarily, but the measurement still does not disturb it.
The unitary evolution is reversible, in the sense that by knowing the state $\ket{\psi(t)}$ at a future time $t$, we can determine the state $\ket{\psi(t_0)}$ at the initial time $t_0$, by $\ket{\psi(t_0)}=U_{\psi}^{-1}(t,t_0)\ket{\psi(t)}$. This allows us to determine the initial condition that led to the observed eigenstate. 
This shows that there is an initial state of the observed system which became, by unitary evolution alone, the observed eigenstate. So in this case there is a description of the measurement process by unitary evolution.

The price to be paid is again that relying solely on the {\qmU} process requires very special initial conditions. The initial conditions had to be from the very \textit{beginning} in such a way so that \textit{later}, when the measurement is performed, an eigenstate of the observable is obtained (fig. \ref{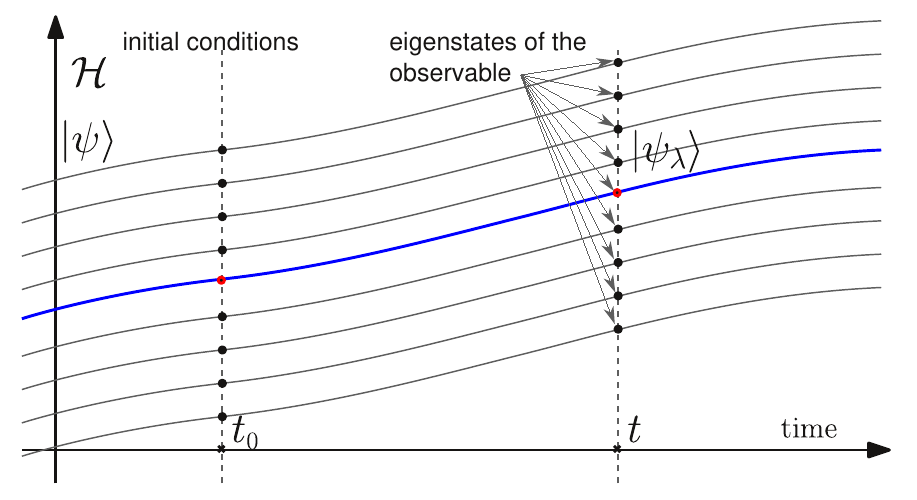}). Any other initial conditions are unphysical, in the sense that they do not lead to definite outcomes of the measurements.

\image{qm_outcomes_2_init_cond.pdf}{0.6}{A measurement performed at the time $t$ finds the observed system in an eigenstate of the observable $\mc O$. Assuming that the measurement did not disturb the observed system, this can be explained by unitary evolution only if we admit that the initial state was already at $t_0$ an eigenstate of the observable $\mc O'=U^{\dagger}(t,t_0)\mc O U(t,t_0)$. This means that the initial conditions had to be very special, in order to obtain definite outcomes of the measurement.}

\end{example}

The previous examples show that, under those assumptions, the initial state of the observed system has to depend on the state of the measurement device (hence on its initial state). But one can still object that the assumptions made were too strong, that in reality the apparatus disturbs the observed system, if this is not already an eigenstate of the observable. One can also object that the environment interacts with the observed system too. But we will see that the additional liberty obtained by introducing interactions with the apparatus and any sort of environment cannot avoid the conclusion that, if the measurement takes place without breaking the unitary evolution, the system has Property \ref{property_ic}.


\section{If the collapse is assumed to be unitary}
\label{s_uqm}

\subsection{Unitary evolution, measurement, and initial conditions}
\label{s_measurement_ic}

Can measurements make a quantum system become an eigenvalue of the observable, for any initial state of the observed system, just by unitary evolution? More precisely, let $\ket{\psi}\in\hilbert_{\psi}$ be a quantum system, and $\mc{O}$ an observable corresponding to the system. Assume that the observable is measured by a system $\ket{\mu}\in\hilbert_{\mu}$, which is considered to be a quantum system. We can consider the measurement apparatus $\ket{\mu}$ as containing the environment, in the sense of the decoherence program. It is often claimed that this ingredient can help the observed system to become an eigenstate of the observable. We are interested if it is possible that the following conditions are simultaneously satisfied:
\begin{enumerate}
	\item 
Initially, the observed system $\ket{\psi}$ and the measurement apparatus $\ket{\mu}$ are considered to be separated. The measurement apparatus should have no prior ``knowledge'' about the observed system, and should not be entangled with it before the measurement. Hence, the total initial state is $\ket{\mu}\ket{\psi}$.
	\item 
The measurement performed by $\ket{\mu}$ finds the observed system to be an eigenstate of the observable $\mc{O}$, for any possible initial state $\ket{\psi}\in\hilbert_{\psi}$ (by disturbing it if necessary).
	\item 
For any eigenvalue $\lambda$ of the observable, there is an initial value of the observed system, so that the outcome of the measurement is $\lambda$.
	\item 
This is achieved by unitary evolution only.
	\item 
Since the observed system is found to be in an eigenstate $\ket{\psi'}$ of the observable, its state is pure, hence is separate from that of the apparatus, which is therefore pure too, say $\ket{\mu'}$. Hence, the total state after the measurement is of the form $\ket{\mu'}\ket{\psi'}$.
\end{enumerate}

The following theorem shows that in general it is not possible to satisfy all these conditions.

\begin{theorem}
\label{thm_unitary_ic}
Let $\hilbert_{\mu}$ and $\hilbert_{\psi}$ be two separable Hilbert spaces,
$\mc{O}$ a Hermitian operator on $\hilbert_{\psi}$ which has at least two distinct eigenvalues, and $\ket{\mu}\in\hilbert_{\mu}$ fixed.
Let $U:\hilbert_{\mu}\otimes\hilbert_{\psi}\to \hilbert_{\mu}\otimes\hilbert_{\psi}$ be a unitary operator so that for any eigenvalue $\lambda$ of $\mc O$ there is at least a vector $\ket{\psi}\in\hilbert_{\psi}$ for which $U\(\ket{\mu}\ket{\psi}\)$ has the form
\begin{equation}
\label{eq:unitary_ic}
U\(\ket{\mu}\ket{\psi}\) = \ket{\mu'}\ket{\psi'},
\end{equation}
where $\ket{\psi'}$ is an eigenvector of the Hermitian operator $\mc{O}$ corresponding to $\lambda$.
For this process to count as measurement, we require that at least two eigenvectors $\ket{\psi_1'}$ and $\ket{\psi_2'}$ corresponding to distinct eigenvalues of $\mc{O}$ are obtained in the right hand side of equation \eqref{eq:unitary_ic}.
Then, there are vectors $\ket{\psi}\in\hilbert_{\psi}$ for which there is no eigenvector $\ket{\psi'}$ of $\mc{O}$ satisfying \eqref{eq:unitary_ic}.
\end{theorem}
\begin{proof}
Let $\ket{\psi_1'}$ and  $\ket{\psi_2'}$ be two orthogonal eigenvectors of $\mc{O}$ in $\hilbert_{\psi}$, so that
\begin{equation}
U\(\ket{\mu}\ket{\psi_1}\) = \ket{\mu_1'}\ket{\psi_1'}
\end{equation}
and
\begin{equation}
U\(\ket{\mu}\ket{\psi_2}\) = \ket{\mu_2'}\ket{\psi_2'}
\end{equation}
for some vectors $\ket{\psi_1},\ket{\psi_2}\in\hilbert_{\psi}$.
For any two complex numbers $\alpha_1$ and  $\alpha_2$,
\begin{equation}
U\(\ket{\mu}\(\alpha_1\ket{\psi_1}+\alpha_2\ket{\psi_2}\)\) = \alpha_1\ket{\mu_1'}\ket{\psi_1'} + \alpha_2\ket{\mu_2'}\ket{\psi_2'}.
\end{equation}
Suppose there are $\ket{\mu''}\in\hilbert_{\mu}$ and an eigenvector $\ket{\psi''}\in\hilbert_{\psi}$ of $\mc{O}$, so that
\begin{equation}
\alpha_1\ket{\mu_1'}\ket{\psi_1'} + \alpha_2\ket{\mu_2'}\ket{\psi_2'} = \ket{\mu''}\ket{\psi''}.
\end{equation}
Because $\braket{\psi_1'}{\psi_2'}=0$, this can only happen if $\ket{\mu_2'}=\beta\ket{\mu_1'}$ for some $\beta\in\C$. But then, 
\begin{equation}
\ket{\mu_1'}\(\alpha_1\ket{\psi_1'} + \alpha_2\beta\ket{\psi_2'}\) = \ket{\mu''}\ket{\psi''}.
\end{equation}
From this it follows that for any $\alpha_1,\alpha_2\in\C$, the linear combination $\alpha_1\ket{\psi_1'} + \alpha_2\beta\ket{\psi_2'}$ is an eigenvector. This can only happen if the eigenvectors $\ket{\psi_1'}$ and $\ket{\psi_2'}$ correspond to the same eigenvalue.
But according to the hypothesis, there are at least two initial states $\ket{\psi_1},\ket{\psi_2}\in\hilbert_{\psi}$, which evolve into eigenvectors $\ket{\psi_1'}$ and $\ket{\psi_2'}$ corresponding to distinct eigenvalues of $\mc{O}$. It follows that the linear combinations of the form $\alpha_1\ket{\psi_1}+\alpha_2\ket{\psi_2}$, where $\alpha_1\neq 0$ and $\alpha_2\neq 0$, do not evolve into eigenvectors of $\mc{O}$.
This shows that from all possible initial states $\ket{\psi}$, only a subset of measure zero can satisfy equation \eqref{eq:unitary_ic}, concluding the proof.
\end{proof}

\begin{remark}
Theorem \ref{thm_unitary_ic} asserts that, under the assumption of unitary evolution during the measurement process, the outcome can be an eigenstate of the observable only for particular initial conditions of the total system. Either the initial state of the observed system has to depend on that of the apparatus, or the environment should depend on the initial state of the observed system to make it evolve precisely into an eigenstate of the observable. 
In both cases, the initial conditions of the observed system and of the rest of the universe have to be dependent.
If the system made of the measurement apparatus and the observed system evolves unitarily, no matter what environment we call to rescue, and no matter how we hope it affects the observed system, as long as evolution is unitary, the conclusion of the theorem cannot be avoided. Property \ref{property_ic}, of restricted initial conditions, is satisfied.
\end{remark}

\begin{remark}
If the initial conditions of the system are randomly chosen, the probability that the initial conditions are restricted so that a future measurement obtains a definite result is zero. This is because the initial conditions which can lead to definite outcomes form a union of subspaces of the Hilbert space, with dimension strict lower than that of the total Hilbert space. So the restriction imposed to the initial conditions is severe.
\end{remark}

\begin{remark}
One may think that we can avoid the conclusion of Theorem \ref{thm_unitary_ic} by relaxing the conditions. For example, we can consider that the observed system is entangled with its environment or other systems, which we assume no longer interact with it or with the measurement device. In this case, we can use its reduced density matrix. We can also relax the condition that after the measurement the observed system and the apparatus are separated. The only condition we have to keep is that at the time of measurement the reduced density matrix of the observed system becomes restricted to an eigenspace of the observable. Can this relaxation allow new possibilities to avoid the conclusion of Theorem \ref{thm_unitary_ic}? The following result shows that this is not the case.
\end{remark}

\begin{theorem}
\label{thm_unitary_ic_density_matrix}
Consider that initially, at $t=t_0$, the observed system and the measurement device are described by a density matrix of the form $\rho_0=\rho(t_0)$, and after the measurement at $t_1$ the unitary evolution operator $U=U(t_1,t_0)$ leads the total system into the state $U(\rho)=U\rho U^\dagger$ so that the reduced density matrix $\rho_{\psi(t_1)}=\tr_{\mu}U(\rho)$ is defined on an eigenspace of the observable (which is supposed to have at least two distinct eigenvalues which can be obtained as results of the measurement). Then, not all mathematically possible initial conditions lead to definite results of the measurement.
\end{theorem}
\begin{proof}
Let $\lambda_1\neq\lambda_2$ be two distinct eigenvalues which can be obtained for different initial conditions of the total system, $\rho(t_0)=\rho_{0,\lambda_1}$ and $\rho(t_0)=\rho_{0,\lambda_2}$. The existence of such initial conditions is ensured by the fact that the observable has at least two distinct eigenvalues, and unitary evolution is an isomorphism between the possible density matrices at $t_1$ and those at $t_0$. Since any convex combination of density matrices satisfying the {\schrod} equation also satisfy it, let us consider initial conditions of the form $\rho(t_0)=a\rho_{0,\lambda_1}+(1-a)\rho_{0,\lambda_2}$, where $a\in(0,1)$. Then, the reduced density matrix at $t=t_1$ is a convex combination of the form $\tr_{\mu}U(\rho)=a\rho_{1,\lambda_1}+(1-a)\rho_{1,\lambda_2}$, where $\rho_{1,\lambda_1}$ and $\rho_{1,\lambda_2}$ are defined on different eigenspaces. Hence, the convex combinations of initial conditions considered do not lead to definite results of the measurements.
\end{proof}

\subsection{Unitary measurement apparatus}

Theorem \ref{thm_unitary_ic} refers to any system $\ket{\mu}$ able to make the observed system $\ket{\psi}$ be an eigenstate of the observable. But normally a measurement apparatus, in addition, is required to leave unchanged the states which are already eigenstates of the observable.
This actually follows from the Born rule, which gives the probability $1$ that an observed system which is in an eigenstate of the observable remains unchanged.

Such an apparatus which works unitarily was discussed for example by Zurek in \cite{schlosshauer2011elegance}, p. 195--196, where he proved that measurement can only distinguish orthogonal states.

Theorem \ref{thm_unitary_ic} also applies to such an apparatus, but the condition that the eigenstates of the observable are left unchanged by the measurement process is even more strict about the admissible initial conditions under the assumption of pure unitary evolution, as the following simple result shows.

\begin{theorem}
\label{thm_measurement_apparatus}
In addition to the conditions of Theorem \ref{thm_unitary_ic}, suppose that for each eigenstate $\ket{\psi_i}\in\hilbert_{\psi}$ of the observable $\mc{O}$, there is a state vector $\ket{\mu_i}\in\hilbert_{\mu}$, so that $U\(\ket{\mu}\ket{\psi_i}\)=\ket{\mu_i}\ket{\psi_i}$. Then, the only initial states $\ket{\psi}$ which are compatible with the measurement (\ie become eigenstates of the observable) are those which already were eigenstates before the measurement.
\end{theorem}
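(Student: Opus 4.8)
The plan is to establish the claimed characterization by proving two inclusions: every eigenstate of $\mc{O}$ is compatible, and, conversely, every compatible initial state must already have been an eigenstate. The first inclusion is immediate from the new hypothesis: if the initial state is an eigenstate $\ket{\psi_i}$, then $U(\ket{\mu}\ket{\psi_i}) = \ket{\mu_i}\ket{\psi_i}$ is a product state whose observed-system factor $\ket{\psi_i}$ is already an eigenvector of $\mc{O}$, so $\ket{\psi_i}$ is compatible. The substantive part is the converse.

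First I would fix an orthonormal eigenbasis $(\ket{\psi_i})_i$ of $\mc{O}$ with $\mc{O}\ket{\psi_i} = \lambda_i\ket{\psi_i}$, and expand an arbitrary initial state as $\ket{\psi} = \sum_i c_i\ket{\psi_i}$. Using linearity of $U$ together with the apparatus hypothesis $U(\ket{\mu}\ket{\psi_i}) = \ket{\mu_i}\ket{\psi_i}$, I obtain
\[
U(\ket{\mu}\ket{\psi}) = \sum_i c_i\ket{\mu_i}\ket{\psi_i}.
\]
Each $\ket{\mu_i}$ is nonzero, because $U$ is unitary and hence preserves the unit norm of $\ket{\mu}\ket{\psi_i}$.

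Next I would impose compatibility, that is, assume $U(\ket{\mu}\ket{\psi}) = \ket{\mu'}\ket{\psi'}$ with $\ket{\psi'}$ an eigenvector of $\mc{O}$ lying in the eigenspace $E_\lambda$ for some eigenvalue $\lambda$. The decisive step is to compare the two expressions by decomposing the $\hilbert_\psi$ factor into $\mc{O}$-eigenspaces. Since $\ket{\psi'}\in E_\lambda$, the product $\ket{\mu'}\ket{\psi'}$ has no component in $\hilbert_\mu\otimes E_\mu$ for any eigenvalue $\mu\neq\lambda$; comparing with the sum above gives $\sum_{i:\lambda_i=\mu} c_i\ket{\mu_i}\ket{\psi_i} = 0$ for every $\mu\neq\lambda$. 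Because the $\ket{\psi_i}$ are mutually orthogonal and each $\ket{\mu_i}$ is nonzero, each term must vanish separately, forcing $c_i = 0$ whenever $\lambda_i\neq\lambda$. Hence $\ket{\psi} = \sum_{i:\lambda_i=\lambda} c_i\ket{\psi_i}\in E_\lambda$ is itself an eigenstate of $\mc{O}$, which proves the converse.

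I expect the main obstacle to be the step that forbids a genuine superposition from surviving as a product output, especially keeping it rigorous when $\mc{O}$ is degenerate and when the recorded pointer states $\ket{\mu_i}$ are all different. The safeguard is to reason eigenspace-by-eigenspace rather than vector-by-vector: orthogonality of the $\ket{\psi_i}$ decouples the $\hilbert_\mu$ factors, so no cancellation can occur between distinct eigenvalue blocks, and the nonvanishing of the $\ket{\mu_i}$ then kills the offending coefficients. I would also take care that this argument invokes only linearity of $U$ and the fixed-eigenstate hypothesis, so that it does not covertly depend on the universal-compatibility assumption of Theorem \ref{thm_unitary_ic} and genuinely stands on its own.
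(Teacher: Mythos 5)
Your proof is correct and follows essentially the same route as the paper's: expand $\ket{\psi}$ in an eigenbasis, use linearity together with $U\(\ket{\mu}\ket{\psi_i}\)=\ket{\mu_i}\ket{\psi_i}$, equate the result with the product form $\ket{\mu'}\ket{\psi'}$, and let orthogonality of the $\ket{\psi_i}$ (plus $\ket{\mu_i}\neq 0$ by unitarity) force the unwanted coefficients to vanish. The only difference is that the paper adapts the eigenbasis so that $\ket{\psi'}$ is itself a basis vector and concludes $\ket{\psi}\propto\ket{\psi'}$, whereas you keep a fixed basis and argue eigenspace-by-eigenspace, which treats a degenerate $\mc{O}$ a bit more cleanly but relies on the same mechanism.
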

\begin{proof}
Let $\ket{\psi}\in\hilbert_{\psi}$ be a state vector so that $U\(\ket{\mu}\ket{\psi}\)$ has the form $U\(\ket{\mu}\ket{\psi}\)=\ket{\mu'}\ket{\psi'}$ for some eigenstate $\ket{\psi'}$ of $\mc{O}$. Let $\(\ket{\psi_i}\)_i$ be an eigenbasis, so that for a particular $j$, $\ket{\psi_j}=\ket{\psi'}$. Then, we can write $\ket{\psi} = \sum_i \alpha_i \ket{\psi_i}$. From the hypothesis of Theorem \ref{thm_unitary_ic}, and because of linearity, 
\begin{equation}
U\(\ket{\mu}\ket{\psi}\) = \sum_i \alpha_i U\(\ket{\mu}\ket{\psi_i}\) = \sum_i \alpha_i \ket{\mu_i}\ket{\psi_i}.
\end{equation}
Hence
\begin{equation}
\sum_i \alpha_i \ket{\mu_i}\ket{\psi_i} = \ket{\mu'}\ket{\psi_j}.
\end{equation}
It follows that 
\begin{equation}
\label{eq_sum_vect_zero}
\sum_{i\neq j} \alpha_i \ket{\mu_i}\ket{\psi_i} + \(\alpha_j \ket{\mu_j} - \ket{\mu'}\)\ket{\psi_j} = 0.
\end{equation}
From the orthonormality of the eigenbasis $\(\ket{\psi_i}\)_i$, it follows that the nonvanishing terms appearing in the equation \eqref{eq_sum_vect_zero} are linearly independent vectors. Therefore, each of them has to be zero. This means that $\alpha_j \ket{\mu_j} = \ket{\mu'}$, and for any $i\neq j$, $\alpha_i=0$. It follows that the only initial state vectors $\ket{\psi}$ which are compatible with the measurement are already eigenstates of $\mc{O}$.
\end{proof}

\begin{remark}
\label{rem:successive_measurements}
One may think that from Theorem \ref{thm_measurement_apparatus} follows that there is no unitary evolution description of two consecutive measurements, if the observables do not have common eigenstates (which is a common situation). However, at least in some cases we can escape this by appealing to the environment to change the state from an eigenstate of the first observable into an eigenstate of the second observable (see section \sref{s:succesive_measurements}). Of course, this can only work for very special initial conditions. Moreover, no universal mechanism is known to do this. For instance, if we appeal to decoherence to provide the mechanism, if the time between successive measurements is smaller than the decoherence time, the unitary evolution may not be enough to accommodate both observations.
\end{remark}

\section{Discussion}

\subsection{Discussion of an argument for collapse}
\label{s:succesive_measurements}

A more elaborate argument in favor of a discontinuous collapse is based on successive incompatible measurements of a system. Consider for example spin measurements of a spin $\frac{1}{2}$ particle along the $x$ and $y$ axes. Suppose at the time $t_1$ we measure its spin along the $x$ axis, obtaining the state $\ket{\uparrow_x}$. If at $t_2>t_1$ we measure the spin along the $y$ axis, we obtain either $\ket{\uparrow_y}$ or $\ket{\downarrow_y}$, with equal probabilities of $\frac{1}{2}$, according to the Born rule. The eigenstates of the observable $S_y$ representing the spin along the $y$ axis are completely different from those of the spin along the $x$ axis. 
How can this be accommodated by unitary evolution?

The usual explanation is that the density matrix of the observed particle decoheres because of the environment, that is, it becomes at the time $t_2$ of the form
$$\rho = p\ket{\uparrow_y}\bra{\uparrow_y} + (1-p)\ket{\downarrow_y}\bra{\downarrow_y},$$
which is then interpreted as a statistical ensemble.
This means that the system is interpreted as being found in the resulting state, and not projected into that state. But if we apply the unitary evolution backwards in time from $t_2$ to $t_1$, can we obtain for the observed system the state $\ket{\uparrow_x}$ at $t_1$? Certainly not, at least not by the unitary evolution of the observed particle alone.

But maybe this can be achieved by the unitary evolution of the full system, containing the observed particle, the measurement devices, and the rest of the environment. The particle will appear to be subject of an interaction with the environment, which rotates the orientation of its spin. But so far it is not known an exact description of how this can occur in general situations, for example by appealing to decoherence, as explained in Remark \ref{rem:successive_measurements}. In \cite{Sto08b,Sto12QMa} is proposed a possible general solution, which involves entanglement between the observed system and the measurement device performing the previous measurement, immediately after the first measurement at $t_1$.

At any rate, the particle cannot evolve freely between $t_1$ and $t_2$ so that it changes its state from $\ket{\uparrow_x}$ to $\ket{\uparrow_y}$ or $\ket{\downarrow_y}$, but in the presence of interactions one cannot rule out such a possibility. 
What we can say is that it can only work for very special initial conditions.


The interaction with the environment we bring into discussion should be such that the particle evolves into the eigenstate of the observable measured at the later time $t_2$. Any change in the interaction will result in a different state, which is not an eigenstate of the observable. Hence, the environment itself has to be in a special initial state, which depends on the observable we measure, and of the state of the observed particle at the previous measurement. In other words, this can only work if the total system satisfied the property of restricted initial conditions, just like Theorem \ref{thm_unitary_ic} says.

Theorem \ref{thm_unitary_ic} does not forbid the possibility of a description of the measurement by pure unitary evolution, but it severely restricts any such description, by requiring the existence of special initial conditions.

\subsection{Measuring entangled particles}
\label{s:EPR}

While Theorems \ref{thm_unitary_ic} and \ref{thm_measurement_apparatus} are true when the initial state of the observed system is separate from that of the apparatus, they can be applied to the measurement of entangled particles too. This is the case of the EPR experiment \cite{EPR35,Bohm51,Bel64}.
Consider two spin $\frac{1}{2}$ particles, with Hilbert spaces $\hilbert_A$ and $\hilbert_B$, whose initial state is a singlet state
\begin{equation}
\ket{\psi}(t_0) = \dsfrac{1}{\sqrt{2}}\(\ket{\uparrow}_A\ket{\downarrow}_B - \ket{\downarrow}_A\ket{\uparrow}_B\),
\end{equation}
which decays at $t_0$ so that the particles become separated in space.
Suppose Alice measures the spin of the particle $A$, and Bob the spin of the particle $B$, along certain directions in space. Let $S_a$ and $S_b$ be the observables, each of them having the eigenvalues $\pm\frac 1 2$.

The two particles are considered to be entangled with each other, but not with other particles.
So, the state $\ket{\psi}\in\hilbert_A\otimes\hilbert_B$ representing both particles is separable from the rest of the universe.
The two measurements $S_a$ and $S_b$ performed by Alice and Bob are equivalent to measuring $\ket{\psi}$ with the observable $S_{ab}=S_a\otimes I + 3 I\otimes S_b$, where $I$ is the identity operator. We can build the total observable in other ways, for example as $S_a\otimes S_b$, or $S_a\otimes I + I\otimes S_b$, but these choices result in only two eigenvalues, while our choice ensures that there are four eigenvalues $\(\frac 1 2, \frac 3 2, \frac 5 2,\tn{ and }\frac 7 2\)$, and we get the same information as if we consider two distinct measurements of $S_a$ and $S_b$. The possible eigenstates of the observable $S_{ab}$ are all the tensor products of the eigenstates of the observables $S_a$ and $S_b$.

Now we have a quantum system which is not entangled with other systems before the measurement, and we can apply Theorems \ref{thm_unitary_ic} and \ref{thm_measurement_apparatus}.
After the measurement, the state of the two particles is separable, being of the form $\ket{\psi}(t_1)=\ket{\psi_a}\ket{\psi_b}$, where $\ket{\psi_a}$ is an eigenstate of $S_a$, and $\ket{\psi_b}$ an eigenstate of $S_b$.
If the measurement is unitary and does not change the states of the observed particles, then we can apply Theorem \ref{thm_measurement_apparatus}, and conclude that the wavefunctions were separable before the measurement, so they became separable immediately after the decay at $t_0$.
Consider now that the measurement is unitary, and the environment interacts with the particles to evolve them into eigenstates of the observables. If the interaction of each particle with its environment is local, and the two laboratories are separated by long distances and they cannot interact with each other, then again the two particles had to be separable before they entered in the laboratories.

An EPR-type experiment involving weak measurements and was analyzed in \cite{aharonov2012future-past}. The analysis shows that weak measurements made after the decay and before the strong measurements are confirmed by the strong measurements performed by Alice and Bob. While the interpretation of these conclusions was made in terms of the two-state vector formalism \cite{aharonov1964time,aharonov2007TSV}, they are also compatible with the possibility described here.

\subsection{Connections with other results}

Any unitary account of the quantum measurement process leads inevitably to the idea that initial conditions have to be very special.
But the dependence of the initial conditions of an observed particle on the future experimental setup, as in the Examples \ref{example_stationary} and \ref{example_undisturbed}, looks like retrocausality.

A similar behavior is present in Wheeler's \textit{delayed choice experiment} \cite{Whe78}. The observer can delay the choice of the observable, until the photon passes by the first beam splitter of the Mach-Zehnder interferometer (fig. \ref{fig:mach_zehnder}). Wheeler showed that, if the observer chooses to perform a ``which-way'' measurement, the photon traveled on one of the two paths A and B, while if she chooses to measure the interference, the photon traveled both ways. This suggests that the system had an initial state compatible with the observable, even if at that time the observable was not yet known.

\begin{figure}[ht]
\begin{minipage}[b]{0.48\linewidth}
\centering
\includegraphics[width=\textwidth]{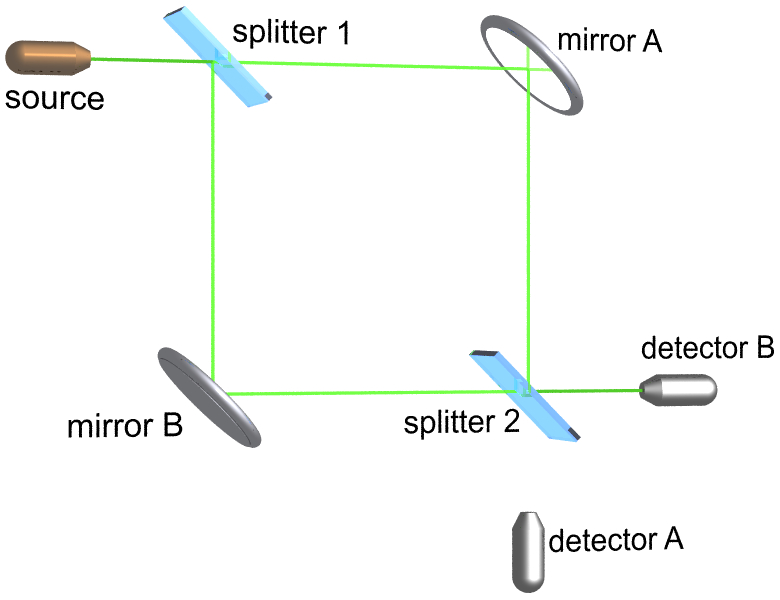}
{a) Both ways.\\
\vspace{4.5pt}}
\end{minipage}
\begin{minipage}[b]{0.48\linewidth}
\centering
\includegraphics[width=\textwidth]{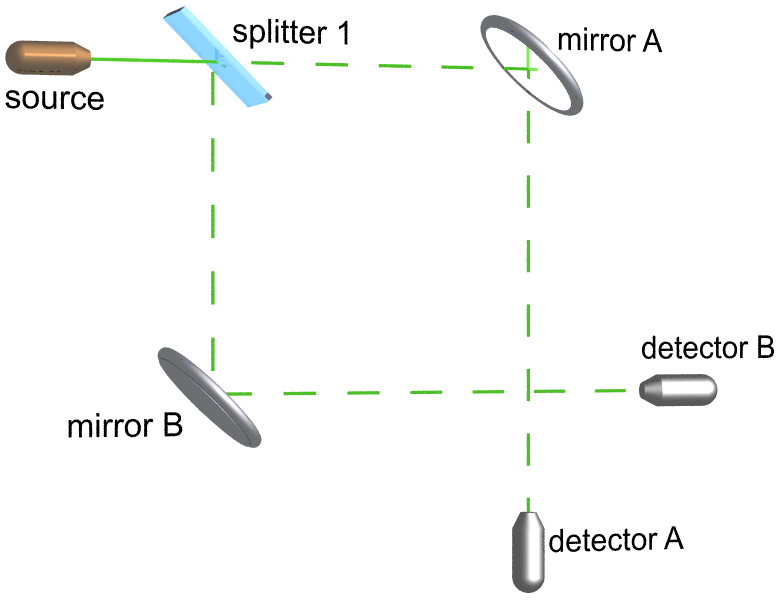}
{b) Which way.}
\end{minipage}
\caption{Delayed choice experiment with Mach-Zehnder's interferometer.}
\label{fig:mach_zehnder}
\end{figure}

A proposal to explain the apparently nonlocal EPR correlations by local and apparently retrocausal paths in spacetime, somewhat similar to that in section \sref{s:EPR}, was proposed by Olivier Costa de Beauregard in 1947 \cite{deBeauregard1953-DEBMQ}. Related ideas appear in \cite{Rietdijk1978retroactiveInfluence}.

Notable with respect to the way past seems to be influenced by future choices is the \textit{two-state vector formalism} (TSV) \cite{aharonov1964time,aharonov1988result,aharonov1991complete,aharonov2007TSV,aharonov2012future-past}. This way of describing things, by combining weak measurement with a description of quantum mechanics based on a state vector evolving towards the future, and another one towards the past, reveals something intimate about the nature of quantum mechanics. Also, an interesting interpretation of the wavefunction collapse using the TSV formalism was proposed in \cite{AharonovCohen2014MeasurementCollapse}. TSV formalism proved to be very proficient in thinking about new experiments, which explore the limits we thought quantum mechanics has. For example, in relation with the idea of apparent influence on the past from the future, in  \cite{aharonov2012future-past} is revealed how the EPR experiment can be modified to show that apparently future strong measurements affect the outcomes of weak measurements performed in the past. While TSV formalism provides a way to think about quantum systems in terms of future initial conditions, the conclusion of our analysis is different and independent of this approach.

A unitary interpretation of quantum mechanics, based on cellular automata, was proposed by 't Hooft \cite{hooft2011wave,Elze2014Action4CellularAutomata,tHooft2014CellularAutomatonInterpretationQM}. This interpretation also reveals the necessity of special initial conditions, which are ensured by \emph{superdeterminism}.

The many worlds interpretation \cite{Eve57,Eve73,dW71,dWEG73} is based solely on the {\schrod} unitary evolution. Because of this, is considered sometimes to provide a unitary account of the {\qmR} process. In this interpretation, there is a universal wavefunction which evolves unitarily, and which is decomposed as a result of a quantum measurement. Accordingly, to distinct outcomes correspond distinct branches, whose superposition is the universal wavefunction. But the unitary account for the measurement process presented here is different. In MWI, the unitary evolution is recovered only when all worlds are considered, but for each world or branch of the wavefunction, the {\qmR} process still appears to contradict the {\qmU} process. If in MWI unitary evolution is valid within a branch, then at the level of that branch, Property \ref{property_ic} holds.

\section{Interpretations of the property of restricted initial conditions}
\label{s_interp}

In this section I will explore some possible interpretations of the property of restricted initial conditions.

What does it mean that, for the evolution to remain unitary during measurement, the initial conditions have to be very special? Does this mean that the initial conditions ``guess'' the future choice of observable and the interaction with the measurement apparatus?

A possible interpretation is that the observer is ``predestined'' to choose the observable, so that the outcome is an eigenvalue of that observable. This explanation is called \textit{superdeterminism} (see \eg \cite{hooft2011wave}). Apparently, it denies the free will of the observer (we will not argue here if we should be concerned about the free will or not. The interested reader may consult \cite{CK06,Sto08f,Sto12QMa,Sto13bSpringer,aaronson2013ghostQM}).

Another possible view is that the initial conditions of the observed system are not decided until the observer chooses the measurement apparatus, and actually performs the experiment. In other words, the initial conditions themselves are delayed.

Which would be more acceptable, to admit that the observed system is ``predestined'' to become an eigenstate of the observable, or that the observer  is ``predestined'' to  choose an observable which is compatible with the observed system? Both interpretations are different from what one would expect causality to be like.

Another possibility is to consider that the initial conditions of the observed system and the measurement apparatus were already entangled, prior to the measurement, although this assumption also means that initial conditions are special (and Theorem \ref{thm_unitary_ic_density_matrix} shows that this cannot avoid anyway Property \ref{property_ic}). For example, in the case of the experiment with the Mach-Zehnder interferometer (fig. \ref{fig:mach_zehnder}), the measurement apparatus can be arranged in two ways, $\ket{\tn{observe both-ways}}$ and $\ket{\tn{observe which-way}}$, and the system can be found either in the $\ket{\tn{both-ways state}}$, or in one of the states $\ket{\tn{which-way state A}}$ and $\ket{\tn{which-way state B}}$ (depending whether the photon went through the arm labeled $A$, or the arm labeled $B$, in fig. \ref{fig:mach_zehnder}).
The density matrix of the total system is therefore a mixture made of the states
\begin{equation}
\label{eq_superposition_observables}
\begin{array}{l}
\ket{\tn{observe both-ways}}\ket{\tn{both-ways state}},\\
\ket{\tn{observe which-way A}} \ket{\tn{which-way state A}}, \tn{ and}\\
\ket{\tn{observe which-way B}} \ket{\tn{which-way state B}}.\\
\end{array}
\end{equation}
We can consider this a statistical ensemble, and eventually find that only one of the states in the mixture is realized, depending on the choice of the observable. This reformulation in terms of entanglement may seem	 more reasonable, but in fact such a statistical ensemble will just be a statistical ensemble of states having very special initial conditions. Only a subset of measure zero of $\hilbert_{\mu}\otimes\hilbert_{\psi}$ is allowed for the initial conditions. Hence, the problem remains the same: the initial conditions have to be very special in a way which seems to anticipate the future choices. We cannot avoid Property \ref{property_ic}, of restricted initial conditions.

If the evolution is indeed always unitary, and the {\qmR} process is just a special case of the {\qmU} process, then the author's preferred explanation of the restricted initial conditions comes from the \textit{block universe} of relativity theory \cite{Sto12QMc,Sto13bSpringer}. At the core of quantum mechanics there is a picture in which the Hamiltonian governs the time evolution. On the other hand, special and general relativity present the universe as a four-dimensional block. The time evolution can be obtained by foliating the space-time manifold in space+time \cite{adm2008admRepublication}. The role of time in quantum mechanics seems to conflict with that in the theory of relativity. However, if we think of the solutions to {\schrod}'s equations from the viewpoint of the block universe, we have to impose a \textit{global consistency principle}, stating that, among the possible solutions, one has to keep only those which are globally self-consistent.

In general, when the evolution of a system is described by a set of partial differential equations (PDE), the natural thing to do is to study the initial value problem. The initial value problems have two parts, the PDE, and the initial conditions. But it is not always guaranteed that there is always a global solution satisfying a particular set of initial conditions. It may be the case that a solution satisfying particular initial conditions does not exist globally. In this case, global consistency eliminates the initial conditions leading to inconsistencies, leaving us only with a restricted set of initial conditions.

Let us see an example. A three-dimensional version of the global consistency principle can be anticipated in {\schrod}'s work of deriving the discrete energy spectrum of the electron in the atom from boundary conditions on the sphere at infinity \cite{Sch26}. His insight was to require that whatever the electron's wavefunction may be around the atom, to be physically admissible, it has to extend to infinity in a self-consistent manner. This led straightforwardly to solutions representing electrons as standing waves around the atom as proposed by de Broglie \cite{dB24}, and explained the spectrum of the Hydrogen atom.

Similarly, one should expect something like this in four dimensions, that is, in spacetime. Global consistency principle implies in a straightforward manner that one should rule out the solutions whose initial conditions will be invalidated by the future experimental setups. 
For example, in the EPR experiment Alice and Bob should obtain consistent outcomes for their measurements. Local solutions to the {\schrod} equation exist in Alice's and Bob's laboratory, but when we patch local solutions together, the results have to be consistent. It is not allowed for example for both Alice and Bob to obtain the spin up along the same axis, this would be inconsistent with the {\schrod} equation itself, and with the initial condition that the two particles were forming a singlet state \cite{Sto12QMc,Sto13bSpringer}\footnote{Patching local solutions together to obtain global solution, and the obstructions preventing this, are studied in \emph{sheaf theory} \cite{bredon1997sheaf}.}. The apparently non-local correlations which make quantum mechanics seem so strange can be viewed as a consequence of the global consistency principle.

These ideas are in agreement with the experiment presented in \cite{aharonov2012future-past}, leading the authors of that article to conclude that

\begin{quote}
what appears to be nonlocal in \emph{space} turns out to be perfectly local in \emph{spacetime}.
\end{quote}

Only the future will tell if global consistency is able to solve the measurement problem in a unitary way.

\subsection*{Acknowledgements}
The author cordially thanks 
Eliahu Cohen, Hans-Thomas Elze, Florin Moldoveanu, and the referees,
for very helpful comments and suggestions.

\end{document}